\documentclass{bmcart}

\usepackage{amsthm,amsmath,amssymb}
\usepackage[utf8]{inputenc} 


\def\includegraphics{}

\usepackage{graphicx}    

\startlocaldefs
\endlocaldefs

\begin{document}

\begin{frontmatter}

\begin{fmbox}
\dochead{Research}


\title{A Risk-Sensitive Task Offloading Strategy for Edge Computing in Industrial Internet of Things}


\author[
  addressref={aff1},                   
  email={xyhao18@fudan.edu.cn}   
]{\inits{X.H}\fnm{Xiaoyu} \snm{Hao}}
\author[
  addressref={aff1},                   
  email={}   
]{\inits{R.Z}\fnm{Ruohai} \snm{Zhao}}
\author[
  addressref={aff1},
  corref={aff1},                       
  email={taoyang@fudan.edu.cn}
]{\inits{T.Y.}\fnm{Tao} \snm{Yang}}
\author[
  addressref={aff3},                   
  email={}   
]{\inits{Y.H.}\fnm{Yulin} \snm{Hu}}
\author[
  addressref={aff1,aff2},              
  corref={aff1},
  email={bohu@fudan.edu.cn}            
]{\inits{B.H}\fnm{Bo} \snm{Hu}}
\author[
  addressref={aff4},                   
  email={}   
]{\inits{Y.Q}\fnm{Yuhe} \snm{Qiu}}

\address[id=aff1]{
  \orgdiv{Department of Electronics Engineering},                          
  \orgname{Fudan University},          
  \city{Shanghai},                     
  \postcode{200433},             
  \cny{China}                          
}

\address[id=aff2]{%
  \orgdiv{Key Laboratory of EMW Information (MoE)},
  \orgname{Fudan University},
  \city{Shanghai},
  \postcode{200433},
  \cny{China}
}

\address[id=aff3]{%
  \orgdiv{ISEK Research Group},
  \orgname{RWTH Aachen University},
  \city{Aachen},
  \cny{Germany}
}

\address[id=aff4]{%
  \orgdiv{China Mobile (Chengdu)},
  \orgname{Information \& Telecommunication Technology Co., Ltd.},
  \city{Chengdu},
  \cny{China}
}


\end{fmbox}


\begin{abstractbox}

\begin{abstract} 
Edge computing has become one of the key enablers for ultra-reliable and low-latency communications in the industrial Internet of Things in the fifth generation communication systems, and is also a promising technology in the future sixth generation communication systems. In this work, we consider the application of edge computing to smart factories for mission-critical task offloading through wireless links. In such scenarios, although high end-to-end delays from the generation to completion of tasks happen with low probability, they may incur severe casualties and property loss, and should be seriously treated. Inspired by the risk management theory widely used in finance, we adopt the Conditional Value at Risk to capture the tail of the delay distribution. An upper bound of the Conditional Value at Risk is derived through analysis of the queues both at the devices and the edge computing servers. We aim to find out the optimal offloading policy taking into consideration both the average and the worst case delay performance of the system. Given that the formulated optimization problem is a non-convex mixed integer non-linear programming problem, a decomposition into sub-problems is performed and a two-stage heuristic algorithm is proposed. Simulation results validate our analysis and indicate that the proposed algorithm can reduce the risk in both the queuing and end-to-end delay.
\end{abstract}


\begin{keyword}
\kwd{Ultra-reliable and low-latency communications, Edge computing, Industrial Internet of Things, Risk management theory, Conditional Value at Risk}
\end{keyword}


\end{abstractbox}
%

\end{frontmatter}



\section{Introduction}
Intelligent factory automation is one of the typical applications envisioned in ultra-reliable and low-latency communications (URLLC) scenarios in the fifth generation (5G) and the coming sixth generation (6G) communications \cite{gu2020knowledge,she2020tutorial}. In future smart factories, machines and sensors are seamlessly connected with each other through wireless links to conduct production tasks corporately. During the manufacturing process, a great number of operations of the machines and robots require complex control algorithms and intense data computation, such as travelling across zones to identify and pick up the objects and controlling the robotic arms to assemble components within a precise position alignment \cite{2018SmartFactory}. The limited built-in computing resources are not sufficient for the stringent latency requirements, so the tasks have to be offloaded to servers for processing \cite{2019EC}. Conventionally, the large volume of data generated at the local devices is uploaded to the cloud computing servers \cite{velte2009cloud}. However, since the cloud computing servers are usually deployed remotely, the large roundtrip transmission latency as well as the possible network congestion makes it hard to meet the stringent end-to-end delay requirements of the actuators and control units in the IIoT system. To overcome these difficulties, edge computing has emerged, where the servers are placed at the edge of the network to achieve a much lower transmission and processing latency \cite{2016EC_VisionandChallenge}.
\par
There have been literatures focusing on improving the service efficiency of the edge computing systems \cite{2016MEC,2019CellularEC,2019CooperativeEC,2019Adaptive,2018cloudfogcomputing}. The authors in \cite{2016MEC} adopt the Markov decision process (MDP) to minimize the average delay of a mobile edge computing system by deciding whether to compute locally or offload the tasks to the edge server. In \cite{2019CellularEC}, the authors define a delay-based Lyapunov function instead of the queue length-based one, and minimize the average delay by optimizing resource scheduling under the Lyapunov optimization framework. Three-tier multi-server mobile computing networks are investigated in \cite{2019CooperativeEC}, where a cooperative task offloading strategy is proposed based on the Alternating Direction Method of Multipliers (ADMM). In \cite{2019Adaptive}, an adaptive learning-based task offloading algorithm is proposed to minimize the average delay for vehicle edge computing systems. Most relatively, \cite{2018cloudfogcomputing} integrates the fog computing to the cloud-based industrial Internet of Things (IIoT), where the task offloading, transmission and computing resource allocation schemes are jointly optimized to reduce the service response latency in the unreliable communication environment. 
\par
The aforementioned works have only focused on the average delay performance, neglecting the worst case performance of the edge computing system. However, in the IIoT systems, the probability of an intense delay jitter usually matters much more than the average delay, since when the delay exceeds a certain threshold, severe accidents may incur such as the deadlock of the manufacture process, the damage to the machines and even casualties. Therefore, in such scenarios, not only the average delay performance, but also the potential hazard, i.e. the risk behind the tail distribution of the delay, should be carefully investigated. There have been some preliminary works to deal with the embedded risks in the edge computing systems \cite{2019Dynamic,2019FBL,2020RiskSensitiveVEC}. In \cite{2019Dynamic}, the tail distribution of the task queue under a probability constraint imposed on the excess value is characterized by the extreme value theory \cite{de2007extreme}, and an offloading strategy is designed to minimize the energy consumption. The authors of \cite{2019FBL} also apply the extreme value theory to the edge computing systems, in order to investigate the extreme event of queue length violation in the computation phase. Besides, in \cite{2020RiskSensitiveVEC}, the authors focus on a vehicular edge computing network where vehicles either fetch images from cameras or acquire synthesized images from an edge computing server. A risk-sensitive learning \cite{mihatsch2002risk} based task fetching and offloading strategy is proposed to minimize the risk behind the end-to-end delay. 
\par
Different from the works mentioned above, we introduce the risk management theory \cite{RiskManagement}, widely used in the field of finance, to the edge computing system in consideration of the uncertainty of the wireless channels. Value at risk (VaR) and Conditional Value at Risk (CVaR) are the two widely used tools to characterize risks. While VaR takes the Gaussian distribution as assumption and lacks convexity and sub-additivity, which makes it inapplicable in many cases, CVaR is a coherent risk measure of any type of probability distribution and is much easier to handle in practice. Therefore, CVaR is employed in this work to model the risk of the task completion delay in the considered edge computing-assisted IIoT system. We aim to minimize both the average delay and the CVaR by jointly designing the offloading and computing resource allocation strategy. The main contributions of this work are summarized as follows:
\begin{itemize}
  \item 
We focus on the hazard incurred by the intense delay jitter in the edge computing-assisted IIoT scenario and introduce the risk management theory to the design of the offloading and computation resource allocation strategy.
  \item 
A cascade queuing model is constructed to describe the end-to-end delay property of the system. Due to the uncertainty of the wireless channel, the transmission time follows a general distribution, which makes the queuing model hard to analyze. By exploring the queuing theory and the risk management theory, we provide an upper bound for both the average end-to-end delay and the CVaR.
  \item
A low-complexity risk-sensitive task offloading strategy is proposed, where both the average performance and the risk with respect to the end-to-end delay are optimized simultaneously. The computation complexity of each procedure of the proposed algorithm is analyzed in details. Simulations under the practical wireless environment in the automated factory validate the effectiveness of the proposed strategy in controlling the risk behind the intense delay jitter.

\end{itemize}
\par
The remainder of this paper is organized as follows. We introduce the system model and analyze both the average delay and the CVaR in Section 2. In Section 3, we formulate the offloading and computation resource allocation problem and propose a low-complexity heuristic algorithm. In Section 4, numerical results are reported with discussions. Finally, Section 5 concludes the paper.

\section{System model}
\subsection{Edge computing system}
As shown in Fig. 1, we consider an edge computing-assisted IIoT system that consists of a set of $\mathcal{M}=\{1,2,\cdots,M\}$ IIoT devices and a set of $\mathcal{N}=\{1,2,\cdots,N\}$ edge computing servers (ECS). Each IIoT device $i\in\mathcal{M}$ randomly generates tasks of identical size of $d_i$ bits, and we assume the task arrival process follows the Possion distribution with average arrival rate $\lambda_i$. We denote by $\omega_i$ the computation intensity of the task of device $i$ \cite{2017EEMEC}, i.e. the number of CPU cycles required to process per bit data. Then, the total CPU cycles needed for a task of device $i$, denoted by $c_i$, can be calculated as $c_i=\omega_i d_i$.

\begin{figure}[!t]
\includegraphics[width=3.3in]{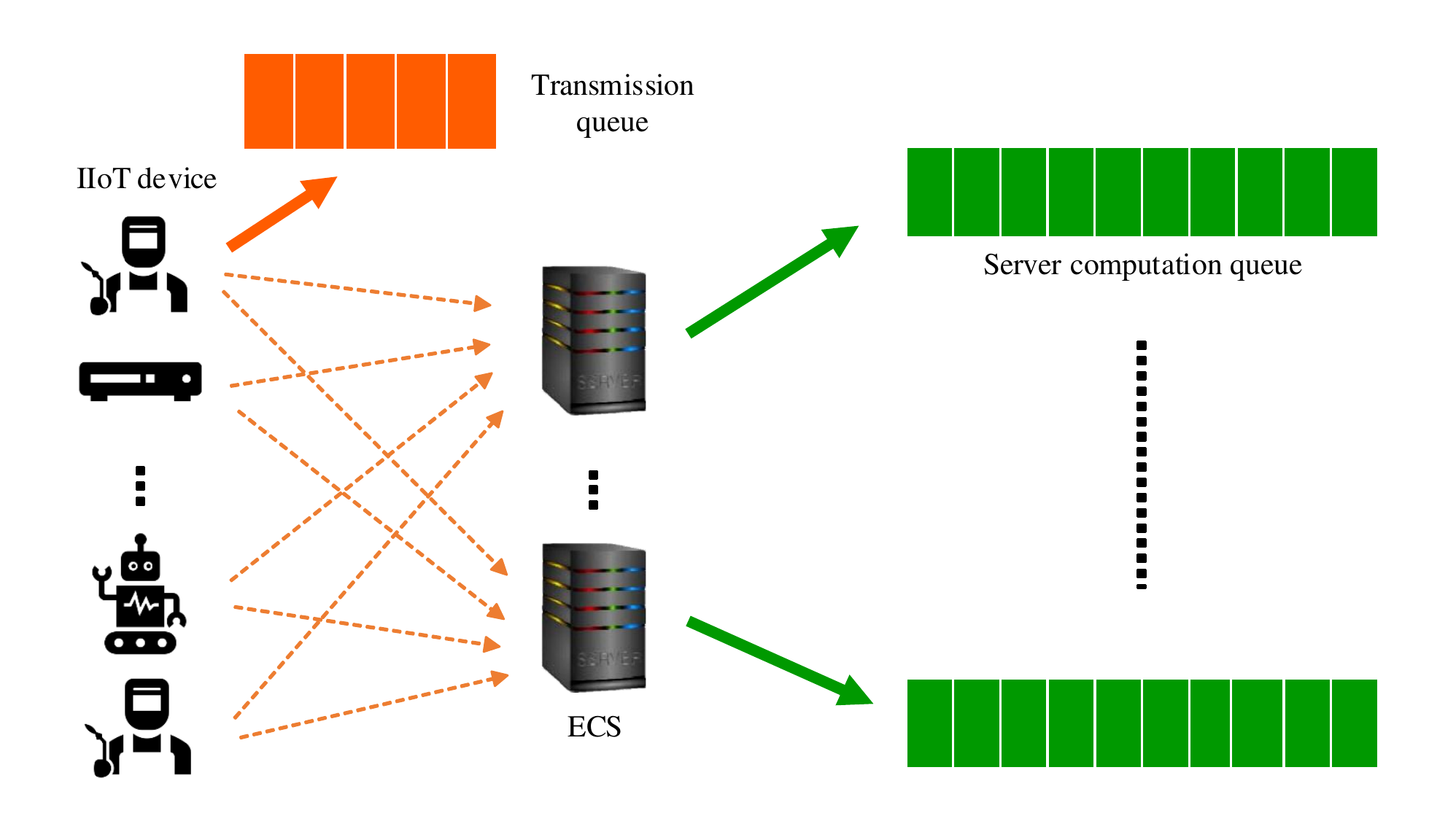}
\caption{System model}
\end{figure}

Owing to the insufficiency of computation capability, the IIoT devices offload their tasks to the ECSs through wireless links. Each ECS $j\in\mathcal{N}$ is equipped with a CPU of $N_j$ cores, which  can work simultaneously and independently. We assume that the tasks of a device can only be offloaded to one ECS, while each core only processes the tasks from the same device, which means a ECS can receive tasks from multiple devices as long as the number of devices it serves doesn't exceed the number of the CPU cores \cite{2019Dynamic}. Let $\mathbf{X}_{M \times N}=\left[ x_{ij} \right]$ as the offloading matrix, where $x_{ij}, i\in\mathbf{M}, j\in\mathbf{N}$ is defined as follows:
\begin{equation}
\label{offloadingmatrix}
x_{ij} =
\begin{cases}
1\ , & \text{device $i$ offloads its tasks to ECS $j$,} \\
0\ , & \text{otherwise.}
\end{cases}
\end{equation}
Under this definition, we can redescribe the offloading scheme in the considered system mathematically as $\sum_{j=1}^N{x_{ij}=1}$ and $\sum_{i=1}^{M}{x_{ij} \leq N_j}$ for $\forall i\in \mathbf{M}, \forall j \in \mathbf{N}$.
\par
Due to the massive deployment of devices in the complex industrial environment, it's impractical to obtain the instantaneous channel state information (CSI) accurately and timely. Therefore, in this work, we design a task offloading strategy based on statistics of the wireless links, i.e. the distribution of the channel gain. We assume blocking-fading channels such that the channel gain remains unchanged during the execution of one task and varies independently between two executions following an identical distribution, which is known a priori. Denote by $g_{ij}$ the channel gain from device $i$ to ECS $j$, the transmission rate can be expressed as follows:
\begin{equation}
\label{transmissionrate}
R_{ij} = B \log_2{( 1+ \frac{g_{ij}p_i}{N_0\Phi_{ij}})},
\end{equation}
where $B$ is the bandwidth, $N_0$ is the noise power, $p_i$ is the transmit power of device $i$ and $\Phi_{ij}$ is the path loss from device $i$ to ECS $j$. Without loss of generality, we assume that the noise power at each ECS is identical, and each IIoT device has an orthogonal channel with the same bandwidth $B$.

\subsection{Queuing model}

In the considered edge computing-assisted IIoT system, there are two kinds of queues: the queue at each device and the queue at each ECS, as depicted in Fig. 2. Without loss of generality, we assume that device $i$ offloads its tasks to ECS $j$, and denote by $Q^D_{ij}$ the queue formed at the device $i$. The arrival process of $Q^D_{ij}$ follows the Poisson process, and the departure process is dependent on the transmission delay denoted by $t^D_{ij}$, which is given by
\begin{equation}
\label{transmissiondelay}
t^D_{ij} = \frac{d_i}{R_{ij}} = \frac{d_i}{B \log_2{( 1+ \frac{g_{ij}p_i}{N_0\Phi_{ij}})}}.
\end{equation} 
Therefore, $Q^D_i$ follows the M/G/1 model. 

\begin{figure}[!t]
\includegraphics[width=3.3in]{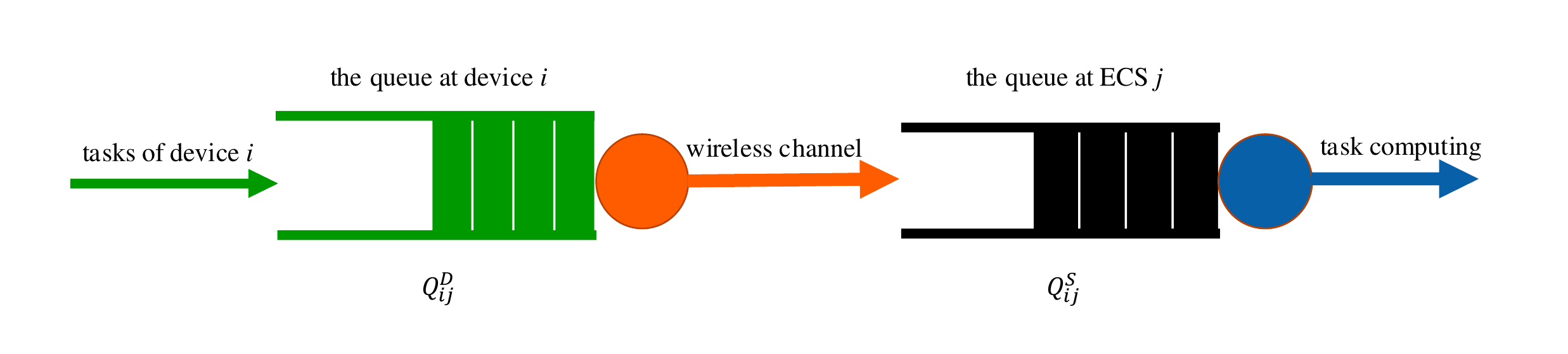}
\caption{Queuing model}
\end{figure}
\par
As for an ECS, tasks from each device connected to it form an independent queue. Denote by $Q^S_{ij}$ the queue of tasks offloaded from device $i$ to ECS $j$, then the arrival process of $Q^S_{ij}$ is the same as the departure process of $Q^D_{ij}$. We denote by the $f_{ij}$ the computation frequency allocated to device $i$ by ECS $j$, then the computation delay denoted by $t^S_{ij}$, i.e. the time required to complete a task of device $i$, is calculated as $t^S_{ij} = {c_i}/{f_{ij}}$. As a result, the service time of a task follows a deterministic distribution and thus $Q^S_{ij}$ follows the G/D/1 model.
\par
Based on the queuing analysis, when device $i$ offloads its tasks to ECS $j$, the total delay denoted by $t_{ij}$ is given by
\begin{equation}
\label{totaldelay}
t_{ij} = W^D_{ij} + t^D_{ij} + W^S_{ij} + t^S_{ij},
\end{equation}
where $W^D_{ij}$ and $W^S_{ij}$ are the queuing delay at the device $i$ and the ECS $j$, respectively. We 'll analyse both the average performance and the CVaR of the total delay in the following.

\subsection{Average delay}
According to (\ref{totaldelay}), the average delay can be calculated as follows
\begin{equation}
\label{averagedelay}
E[t_{ij}] = E[W^D_{ij}] 
+ E[t^D_{ij}] + E[W^S_{ij}] + E[t^S_{ij}].
\end{equation}
\par
As for the queuing delay at device $i$, we denote by $\mu_{ij}$ the service rate of $Q^D_{ij}$, which can be calculated as the reciprocal of the average transmission time, i.e.
\begin{equation}
\label{mu}
\mu_{ij}= \frac{1}{E[t^D_{ij}]} = \frac{d_i}{E[R_{ij}]},
\end{equation}
where the expectation is taken over the probability distribution of the channel gain. According to \cite{bertsekas1992data}, the average queuing delay at device $i$ can be expressed as follows
\begin{equation}
\label{E_W_D}
E[W^D_{ij}]=\frac{\lambda_i}{2\mu^2_{ij}(1-\rho_{ij})},
\end{equation}
where $\rho_{ij}=\lambda_i/\mu_{ij}$.
\par
To analyze the queuing delay in the G/D/1 queuing model of $Q^S_{ij}$, we first give the following lemma \cite{kleinrock1976queueing}.

\newtheorem{lemma}{Lemma}
\begin{lemma}
\label{Lemma1}
In the G/G/1 queuing model, let $\lambda$, $\mu$ and $W$ be the arrival rate, service rate and queuing delay, respectively, then an upper bound of the average queuing delay is given by
\begin{equation}
\label{UB_W}
E[W] \leq \frac{\lambda({\sigma}^2_a+{\sigma}^2_b)}{2(1-\rho)},
\end{equation}
where $\rho=\lambda/\mu$, $\sigma^2_a$ is the variance of the inter-arrival time, and $\sigma^2_b$ is the variance of the service time.
\end{lemma}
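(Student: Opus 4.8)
The plan is to recognize this as the classical Kingman upper bound and to derive it from the Lindley recursion for the stationary waiting time. First I would set up the recursion $W_{n+1} = (W_n + U_n)^+$, where $U_n = S_n - A_{n+1}$ is the difference between the service time $S_n$ of the $n$-th customer and the following inter-arrival time $A_{n+1}$, and $(\cdot)^+ = \max(0,\cdot)$. By the model assumptions $E[A]=1/\lambda$, $E[S]=1/\mu$, $\mathrm{Var}(A)=\sigma_a^2$, $\mathrm{Var}(S)=\sigma_b^2$, and since service and inter-arrival times are independent, $E[U]=1/\mu-1/\lambda<0$ (using stability $\rho<1$) and $\mathrm{Var}(U)=\sigma_a^2+\sigma_b^2$. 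Passing to the stationary regime, I let $W$ be the stationary waiting time, take $U$ independent of $W$, and write $Z := W+U$, so that $W \stackrel{d}{=} Z^+$.

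Next I would exploit the orthogonal decomposition $Z = Z^+ - Z^-$ with $Z^+Z^- = 0$, which gives the pointwise identity $(Z^+)^2 = Z^2 - (Z^-)^2$. Taking expectations, invoking $E[(Z^+)^2]=E[W^2]$ in steady state, and using the independence of $W$ and $U$, the common $E[W^2]$ terms cancel and I obtain the balance equation
\[
2E[W]\,E[U] = E[(Z^-)^2] - E[U^2].
\]
Because $E[U]<0$, dividing through by $2E[U]$ reverses inequalities, so an upper bound on $E[W]$ will follow from a suitable \emph{lower} bound on $E[(Z^-)^2]$.

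The crux is to produce that lower bound without dragging in the mean-square term hidden in $E[U^2]=\mathrm{Var}(U)+(E[U])^2$. I would first match first moments: taking expectations in $Z = Z^+ - Z^-$ and using $E[Z^+]=E[W]$ together with $E[Z]=E[W]+E[U]$ yields the exact identity $E[Z^-]=-E[U]$. Jensen's inequality then gives $E[(Z^-)^2]\ge (E[Z^-])^2 = (E[U])^2$, which cancels precisely against the $(E[U])^2$ inside $E[U^2]$, leaving $E[(Z^-)^2]-E[U^2] \ge -\mathrm{Var}(U) = -(\sigma_a^2+\sigma_b^2)$. Substituting into the balance equation and dividing by the negative quantity $2E[U]$ produces $E[W] \le (\sigma_a^2+\sigma_b^2)/(-2E[U])$.

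Finally I would convert this to the stated form: since $-E[U]=1/\lambda-1/\mu=(\mu-\lambda)/(\lambda\mu)$ and $\rho=\lambda/\mu$, a routine simplification gives $(\sigma_a^2+\sigma_b^2)/(-2E[U]) = \lambda(\sigma_a^2+\sigma_b^2)/\bigl(2(1-\rho)\bigr)$, which is exactly the claimed bound. I expect the main obstacle to be this cancellation step: the naive estimate $E[(Z^-)^2]\ge 0$ leaves a stray positive term and only yields a strictly looser bound, so the essential insight is that the first-moment identity $E[Z^-]=-E[U]$ feeds Jensen's inequality to remove precisely the unwanted $(E[U])^2$ contribution and leave only the variances. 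Some care is also needed to justify the stationary exchange $E[(Z^+)^2]=E[W^2]$ and the independence of $U$ and $W$, both of which rest on the queue being stable so that a genuine stationary waiting-time distribution exists.
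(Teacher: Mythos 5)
Your proposal is correct, and it is in fact the classical Kingman derivation of this bound. There is, however, nothing in the paper to compare it against: the paper does not prove Lemma~1 at all, but simply imports it from Kleinrock's textbook (the citation attached to the lemma statement), and only uses it later to prove Theorem~1 by specializing $\sigma_b^2=0$ for the G/D/1 queue. Your argument supplies exactly the proof that sits behind that citation, and the key steps all check out: the stationary Lindley identity $W \stackrel{d}{=} (W+U)^+$ with $U$ independent of $W$; the orthogonal split $Z=Z^+-Z^-$, $Z^+Z^-=0$, giving $(Z^+)^2=Z^2-(Z^-)^2$ and hence, after the $E[W^2]$ terms cancel, the balance equation $2E[W]E[U]=E[(Z^-)^2]-E[U^2]$; the first-moment identity $E[Z^-]=-E[U]$ feeding Jensen's inequality so that the $(E[U])^2$ part of $E[U^2]$ cancels and only $\mathrm{Var}(U)=\sigma_a^2+\sigma_b^2$ survives; and the final algebra $\mathrm{Var}(U)/(-2E[U])=\lambda(\sigma_a^2+\sigma_b^2)/\bigl(2(1-\rho)\bigr)$. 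You also correctly flag the one genuine subtlety: cancelling $E[W^2]$ from both sides presupposes $E[W^2]<\infty$, so a fully rigorous version needs finite second moments (or a truncation argument) in addition to stability $\rho<1$, and the independence structure requires the i.i.d.\ GI/GI/1 reading of ``G/G/1,'' which is also what the cited result assumes. In short: the paper outsources this lemma to a reference, while you prove it from first principles, and your proof is the standard one.
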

According to Lemma \ref{Lemma1}, we obtain the following theorem characterizing the upper bound of the average queuing delay at a ECS. 

\newtheorem{theorem}{Theorem}
\begin{theorem}
\label{Theorem1}
When device $i$ offloads its tasks to ECS $j$, an upper bound of the average queuing delay at ECS $j$ is given by
\begin{equation}
\label{UB_W_S}
E[W^S_{ij}] \leq  \frac{\lambda^S_{ij}{\sigma^S_{ij}}^2}{2(1-\rho^S_{ij})},
\end{equation}
where $\lambda^S_{ij}$ is the arrival rate of $Q^S_{ij}$, $\rho^S_{ij} = \lambda^S_{ij}/\mu_{ij}$ is the traffic intensity and ${\sigma^S_{ij}}^2$ is the variance of the arrival interval of tasks offloaded from device $i$ to ECS $j$.
\end{theorem}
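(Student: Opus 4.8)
The plan is to read Theorem~\ref{Theorem1} as a direct specialization of Lemma~\ref{Lemma1} to the G/D/1 queue $Q^S_{ij}$, so the whole argument reduces to correctly matching the generic symbols of the G/G/1 bound to the quantities of the server queue. First I would recall from the queuing model that the arrival process of $Q^S_{ij}$ coincides with the departure process of $Q^D_{ij}$; hence its inter-arrival time is a (generally intractable) random variable whose mean is $1/\lambda^S_{ij}$ and whose variance is, by definition, ${\sigma^S_{ij}}^2$. This pins down the arrival-side inputs of Lemma~\ref{Lemma1}, namely $\lambda = \lambda^S_{ij}$ and $\sigma^2_a = {\sigma^S_{ij}}^2$.

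The key observation is that the service at the ECS is deterministic. Since the allocated computation frequency $f_{ij}$ is fixed, every task of device $i$ takes exactly $t^S_{ij} = c_i/f_{ij}$ time units to process, so the service-time distribution is degenerate and its variance vanishes, $\sigma^2_b = 0$. Substituting $\sigma^2_b = 0$ into the G/G/1 inequality of Lemma~\ref{Lemma1} collapses the numerator from $\lambda(\sigma^2_a + \sigma^2_b)$ to $\lambda^S_{ij}{\sigma^S_{ij}}^2$, while the denominator becomes $2(1-\rho^S_{ij})$ with $\rho^S_{ij}$ the traffic intensity of the server queue, i.e.\ the ratio of its arrival rate to the reciprocal of the deterministic service time $t^S_{ij}$. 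This yields (\ref{UB_W_S}) at once.

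The points that require care are bookkeeping rather than genuine difficulty. I would verify the stability condition $\rho^S_{ij} < 1$ so that the right-hand side is finite and the steady-state queuing delay is well defined, and I would make explicit that the service rate entering $\rho^S_{ij}$ is $1/t^S_{ij}$. Because the G/G/1 inequality is distribution-agnostic on the arrival side, no assumption on the precise shape of the departure process of $Q^D_{ij}$ is needed: the bound depends on that process only through its second moment ${\sigma^S_{ij}}^2$. Consequently the true obstacle is not in this derivation — which is essentially immediate — but upstream, in characterizing ${\sigma^S_{ij}}^2$ for the departure process of an M/G/1 queue; the G/G/1 form of Lemma~\ref{Lemma1} is precisely what lets us state the bound while deferring that harder computation.
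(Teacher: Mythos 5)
Your proof is correct and takes essentially the same route as the paper's: both view the G/D/1 queue at the ECS as a G/G/1 queue whose service-time variance vanishes, and substitute $\lambda=\lambda^S_{ij}$, $\sigma^2_a={\sigma^S_{ij}}^2$, $\sigma^2_b=0$ and $\rho=\rho^S_{ij}$ into the bound of Lemma~\ref{Lemma1}. If anything you are slightly more careful than the paper: you make explicit that the traffic intensity entering the bound must be the server queue's own intensity $\lambda^S_{ij}t^S_{ij}$ (arrival rate times the deterministic service time $c_i/f_{ij}$), whereas the theorem statement writes $\rho^S_{ij}=\lambda^S_{ij}/\mu_{ij}$ with $\mu_{ij}$ the \emph{device} queue's service rate, and you also note the stability condition $\rho^S_{ij}<1$ needed for the right-hand side to be meaningful.
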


\begin{proof}
\label{proofofTheorem1}
G/D/1 model can be seen as a special case of G/G/1 model with the service time following a deterministic distribution, the variance of which is zero. By substituting $\lambda=\lambda^S_{ij}$, $\sigma^2_a = {\sigma^S_{ij}}^2$, $\sigma^2_b = 0$ and $\rho = \rho^S_{ij}$ into (\ref{UB_W}), we get (\ref{UB_W_S}) and Theorem~\ref{Theorem1} is proved.
\end{proof}
Note that, due to the cascaded structure between $Q^D_{ij}$ and $Q^S_{ij}$, the arrival rate $\lambda^S_{ij}$ of $Q^S_{ij}$ is equal to the departure rate of $Q^D_{ij}$, which can be evaluated from the analysis of the inter-departure time in \cite{bertsimas1990departure}. Similarly, the variance of the inter-arrival time of $Q^S_{ij}$, i.e. ${\sigma^S_{ij}}^2$, can be derived from the variance of the inter-departure time of $Q^D_{ij}$ as in \cite{yeh2000characterizing}. 
\par
Finally, combining  (\ref{averagedelay}), (\ref{mu}) and (\ref{UB_W_S}), the upper bound of the average total delay can be obtained in the following corollary.

\newtheorem{corollary}{Corollary}
\begin{corollary}
\label{corollary1}
When device $i$ offloads its tasks to ECS $j$, an upper bound of the average total delay $E[t_{ij}]$ is given by
\begin{equation}
\label{UB_t}
E[t_{ij}] \leq \frac{\lambda_i}{2\mu^2_{ij}(1-\rho_{ij})} + \frac{1}{\mu_{ij}}
+ \frac{\lambda^S_{ij}{\sigma^S_{ij}}^2}{2(1-\rho^S_{ij})} + \frac{c_i}{f_{ij}}.
\end{equation}
\end{corollary}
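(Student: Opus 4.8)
The plan is to build the bound term by term from the additive decomposition in (\ref{averagedelay}), since three of the four contributions are already pinned down exactly and only one requires an inequality. First I would write $E[t_{ij}] = E[W^D_{ij}] + E[t^D_{ij}] + E[W^S_{ij}] + E[t^S_{ij}]$ and dispose of the two transmission-side terms directly: the service-rate definition (\ref{mu}) gives $E[t^D_{ij}] = 1/\mu_{ij}$ exactly, and the M/G/1 queuing-delay formula (\ref{E_W_D}) supplies the closed form $E[W^D_{ij}] = \lambda_i/(2\mu^2_{ij}(1-\rho_{ij}))$ for the queue at the device.

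Next I would treat the computation-side terms. Because the service time at the ECS is deterministic, $t^S_{ij} = c_i/f_{ij}$ is a constant, so $E[t^S_{ij}] = c_i/f_{ij}$ with no approximation. The single genuine inequality then enters through the queuing delay at the server: invoking Theorem~\ref{Theorem1}, which specializes Lemma~\ref{Lemma1} to the G/D/1 case, I replace $E[W^S_{ij}]$ by its upper bound $\lambda^S_{ij}{\sigma^S_{ij}}^2/(2(1-\rho^S_{ij}))$ from (\ref{UB_W_S}).

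Finally, summing the three exact terms together with the one upper-bounded term, and noting that relaxing a single summand to a valid upper bound while keeping the others exact yields an upper bound for the whole sum, I obtain (\ref{UB_t}). There is no substantive obstacle in the derivation itself, as it is a substitution argument; the only point requiring care is the direction of the inequality, namely that $E[W^S_{ij}]$ is the sole term that is relaxed, so that the resulting expression genuinely dominates $E[t_{ij}]$ rather than merely approximating it. The conceptual content has already been absorbed into Theorem~\ref{Theorem1} (and, upstream, into the characterization of $\lambda^S_{ij}$ and ${\sigma^S_{ij}}^2$ from the inter-departure analysis of the cascaded device queue), so this corollary is essentially a bookkeeping step that assembles those pieces into a single bound.
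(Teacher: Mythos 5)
Your proposal is correct and follows exactly the paper's own route: the paper obtains the corollary by combining the additive decomposition (\ref{averagedelay}) with the exact M/G/1 and transmission-delay terms from (\ref{E_W_D}) and (\ref{mu}), the deterministic computation delay $c_i/f_{ij}$, and the single relaxation $E[W^S_{ij}] \leq \lambda^S_{ij}{\sigma^S_{ij}}^2/(2(1-\rho^S_{ij}))$ from Theorem~\ref{Theorem1}. Your write-up simply makes explicit the same term-by-term substitution that the paper compresses into one sentence.
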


Since each IIoT device offloads its tasks to only one ECS, for each device $i$ the task completion time denoted by $t_i$ can be expressed as $t_i=\sum^N_{j=1}{x_{ij}t_{ij}}$, and correspondingly an upper bound of the average total delay of device $i$ based on (\ref{UB_t}) is given by
\begin{equation}
\label{UB_averagetotaldelay}
E[t_i] = \sum^N_{j=1}{x_{ij}E[t_{ij}]} \leq E[t_i]^\ast,
\end{equation}
where
\begin{equation}
\label{E_t_ast}
E[t_i]^\ast = \sum^N_{j=1}{x_{ij}\left( \frac{\lambda_i}{2\mu^2_{ij}(1-\rho_{ij})} + \frac{1}{\mu_{ij}}
+ \frac{\lambda^S_{ij}{\sigma^S_{ij}}^2}{2(1-\rho^S_{ij})} + \frac{c_i}{f_{ij}} \right)}.
\end{equation}

\subsection{Risk metric for delay}
Risk in the considered edge computing-assisted system is mainly reflected in the high latency happened with low probability. Specifically, we introduce CVaR as a measure of risk to characterize the tail distribution of the delay. Before we formally define CVaR, we first give the definition of VaR \cite{jorion2000value}. 
\newtheorem{MyDef}{Definition}
\begin{MyDef}
\label{def1}
For a random variable $X$ and a confidence level $\alpha\in(0,1)$, the $\alpha$-VaR of $X$ is the $\alpha$-percentile of the distribution of $X$, which can be expressed mathematically as follows
\begin{equation}
\label{VaR}
{\rm VaR}_\alpha(X) = \inf_\gamma{\{\gamma: P(X>\gamma) \leq \alpha\}}.
\end{equation} 
\end{MyDef}
The CVaR measures the expected loss in the right tail given a particular threshold has been crossed, and can also be considered as the average of potential loss that exceed the VaR. The definition of CVaR is given as follows \cite{acerbi2002coherence}.
\begin{MyDef}
\label{def2}
For a random variable $X$ and a confidence level $\alpha\in(0,1)$, the $\alpha$-CVaR of $X$ is given by
\begin{align}
\label{CVaR}
{\rm CVaR}_\alpha(X) &= E[X|X>{\rm VaR}_\alpha(X)] \notag \\
&= \frac{1}{1-\alpha} \int_{\alpha}^1{{\rm VaR}_\theta(X)d\theta}.
\end{align}
\end{MyDef}

To characterize the CVaR of the total delay, we first analyze the CVaR of each part of the total delay in (\ref{totaldelay}). Recall that the service process of the queue at the device follows general distribution, so it is quite difficult to directly characterize the probability distribution of the waiting time. However, in the considered IIoT scenario, it is reasonable to take the \emph{heavy} traffic assumption. According to \cite{kingman1962queues}, the cumulative distribution function (CDF) of $W_{ij}^D$ can be approximated as
\begin{equation}
\label{approx_CDF_W_D}
F(W_{ij}^D) \approx 1 - {\rm exp}\left[{- \frac{2(1 - \rho_{ij})}{\lambda_i V_{ij}} W_{ij}^D}\right],
\end{equation}
where $V_{ij}$ is the variance of the service time of $Q_{ij}^D$, i.e. the transmission time of a task. Based on (\ref{approx_CDF_W_D}), the CVaR of $W_{ij}^D$ can be evaluated in the following theorem.

\begin{theorem}
\label{Theorem2}
For a confidence level $\alpha \in (0,1)$, the $\alpha$-CVaR of $W_{ij}^D$ can be expressed as
\begin{equation}
\label{CVaR_W_D}
{\rm CVaR}_\alpha(W_{ij}^D) = \frac{\lambda_i V_{ij}}{2(1- \rho_{ij})} [1 - \ln{(1-\alpha)}].
\end{equation}
\end{theorem}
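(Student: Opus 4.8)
The plan is to recognize that the heavy-traffic approximation (\ref{approx_CDF_W_D}) is nothing but the CDF of an exponential random variable, after which the whole statement reduces to a one-parameter calculation. Writing $\beta_{ij} = \tfrac{2(1-\rho_{ij})}{\lambda_i V_{ij}}$, equation (\ref{approx_CDF_W_D}) reads $F(w) = 1 - e^{-\beta_{ij} w}$ for $w\ge 0$, so $W_{ij}^D$ is approximately $\mathrm{Exp}(\beta_{ij})$ with mean $1/\beta_{ij} = \tfrac{\lambda_i V_{ij}}{2(1-\rho_{ij})}$. Since this mean is exactly the prefactor in the target formula (\ref{CVaR_W_D}), the task is only to show that the bracketed factor $[1-\ln(1-\alpha)]$ is the CVaR of a unit-mean exponential.

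First I would compute the VaR by inverting the tail of the exponential: from $P(W_{ij}^D > \gamma) = e^{-\beta_{ij}\gamma}$ and Definition~\ref{def1}, solving $e^{-\beta_{ij}\gamma} = 1-\alpha$ gives $\mathrm{VaR}_\alpha(W_{ij}^D) = -\tfrac{1}{\beta_{ij}}\ln(1-\alpha)$. I would then obtain the CVaR through the conditional-expectation form in Definition~\ref{def2}, exploiting the memoryless property of the exponential law: conditioned on $\{W_{ij}^D > \gamma\}$, the overshoot $W_{ij}^D - \gamma$ is again $\mathrm{Exp}(\beta_{ij})$, whence $E[\,W_{ij}^D \mid W_{ij}^D > \gamma\,] = \gamma + 1/\beta_{ij}$. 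Substituting $\gamma = \mathrm{VaR}_\alpha(W_{ij}^D)$ yields $\mathrm{CVaR}_\alpha(W_{ij}^D) = \tfrac{1}{\beta_{ij}}\bigl[1-\ln(1-\alpha)\bigr]$, which is (\ref{CVaR_W_D}). As a cross-check I would also evaluate the integral representation $\tfrac{1}{1-\alpha}\int_\alpha^1 \mathrm{VaR}_\theta\,d\theta$ with $\mathrm{VaR}_\theta = -\tfrac{1}{\beta_{ij}}\ln(1-\theta)$; the substitution $u = 1-\theta$ collapses it to $\tfrac{1}{\beta_{ij}(1-\alpha)}\int_0^{1-\alpha}(-\ln u)\,du$, and the elementary value $\int_0^{1-\alpha}(-\ln u)\,du = (1-\alpha)[1-\ln(1-\alpha)]$ reproduces the same answer.

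I expect no analytical difficulty; the one place that demands care is the quantile convention, which must be pinned down so that it is consistent with the normalization $1/(1-\alpha)$ in the integral form of Definition~\ref{def2}. That normalization forces $\mathrm{VaR}_\theta$ to be the upper $(1-\theta)$-quantile, i.e. $P(W_{ij}^D > \mathrm{VaR}_\theta) = 1-\theta$, so that $\mathrm{VaR}_\theta = -\tfrac{1}{\beta_{ij}}\ln(1-\theta)$ rather than $-\tfrac{1}{\beta_{ij}}\ln\theta$; getting this right is precisely what produces $\ln(1-\alpha)$ instead of $\ln\alpha$ in the final expression. A minor point worth noting is that the exponential support is $[0,\infty)$ and $\rho_{ij}<1$ in the stable regime, so $\beta_{ij}>0$ and both the VaR and the conditional expectation are finite and positive for every $\alpha\in(0,1)$, which legitimizes the manipulations above.
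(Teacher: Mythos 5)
Your proof is correct, and its two halves relate to the paper's proof in an interesting way: your ``cross-check'' is, in fact, exactly the paper's proof, while your primary argument is a genuinely different (and cleaner) route. The paper proceeds by inverting the exponential tail to get ${\rm VaR}_\alpha(W_{ij}^D) = -\tfrac{\lambda_i V_{ij}}{2(1-\rho_{ij})}\ln(1-\alpha)$ and then substituting this into the averaged-quantile form of Definition~\ref{def2}, computing the integral $\tfrac{1}{1-\alpha}\int_\alpha^1 {\rm VaR}_\theta\,d\theta$ directly --- precisely the calculation you relegate to verification. Your main line instead invokes the memoryless property of the exponential law to get $E[\,W_{ij}^D \mid W_{ij}^D > \gamma\,] = \gamma + 1/\beta_{ij}$ and reads off the CVaR from the conditional-expectation form, which avoids any integration and makes the structure ``${\rm CVaR} = {\rm VaR} + \text{mean}$'' transparent; the paper's route is more mechanical but generalizes to distributions without the memoryless property. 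One further point in your favor: your care about the quantile convention addresses a real inconsistency in the paper. Definition~\ref{def1} as literally written ($P(X>\gamma)\le\alpha$) would yield ${\rm VaR}_\alpha = -\tfrac{1}{\beta_{ij}}\ln\alpha$, yet the paper's proof writes $\inf_\gamma\{\gamma: e^{-\beta_{ij}\gamma}\le\alpha\}$ and then states the answer $-\tfrac{1}{\beta_{ij}}\ln(1-\alpha)$, silently switching to the percentile convention $P(X>{\rm VaR}_\theta)=1-\theta$ that you correctly identify as the one forced by the $\tfrac{1}{1-\alpha}$ normalization in Definition~\ref{def2}. You caught and resolved a slip the paper itself glosses over.
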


\begin{proof}
\label{proofofTheorem2}
According to the definition of VaR in Definition \ref{def1}, we can obtain that
\begin{align}
\label{VaR_W_D}
{\rm VaR}_\alpha(W_{ij}^D)
&= \inf_\gamma{\{\gamma: e^{- \frac{2(1 - \rho_{ij})}{\lambda_i V_{ij}} \gamma} \leq \alpha\}} \notag \\
&= \frac{-\lambda_i V_{ij}}{2(1 - \rho_{ij})} \ln{(1-\alpha)}.
\end{align}
By substituting (\ref{VaR_W_D}) to (\ref{CVaR}), the CVaR of $W_{ij}^D$ can be calculated as follows.
\begin{align}
\label{CVaR_W_D_cal}
{\rm CVaR}_\alpha(W_{ij}^D) 
&= \frac{1}{1-\alpha} \int_{\alpha}^1{\frac{-\lambda_i V_{ij}}{2(1 - \rho_{ij})} \ln{(1-\theta)}d\theta} \notag \\
&= \frac{\lambda_i V_{ij}}{2(1- \rho_{ij})} [1 - \ln{(1-\alpha)}].
\end{align}
\end{proof}
As for the transmission delay $t_{ij}^D$, we define a auxiliary function 
\begin{equation}
\label{auxiliary}
\phi_\alpha(t_{ij}^D,\gamma) := \gamma + \frac{1}{1+\alpha}E[{(t_{ij}^D-\gamma)}^+],
\end{equation}
where $(x)^+= \max{(0,x)}$ and the expectation is taken over the distribution of the channel gain. According to \cite{rockafellar2000optimization}, the CVaR of $t_{ij}^D$ can finally be calculated as
\begin{align}
\label{CVaR_t_D_cal}
{\rm CVaR}_\alpha(t_{ij}^D) 
&= \min_{\gamma \in \mathcal{R}}{\phi_\alpha(t_{ij}^D,\gamma)} \notag \\
&= \min_{\gamma \in \mathcal{R}}{\{\gamma + \frac{1}{1+\alpha}E[{(t_{ij}^D-\gamma)}^+]\}}
\end{align}
\par
Now we turn to the queue at the ECS. Similar to (\ref{approx_CDF_W_D}), the CDF of $W_{ij}^S$ can be approximated as
\begin{equation}
\label{approx_W_S}
F(W_{ij}^D) \approx 1 - {\rm exp}\left[{- \frac{2(1 - \rho_{ij}^S)}{\lambda_{ij}^S {\sigma^S_{ij}}^2} W_{ij}^S}\right],
\end{equation}
and thus the CVaR of the queuing delay at the ECS can be evaluated in the following theorem.
\begin{theorem}
\label{Theorem3}
For a confidence level $\alpha \in (0,1)$, the $\alpha$-CVaR of $W_{ij}^S$ can be expressed as
\begin{equation}
\label{CVaR_W_S}
{\rm CVaR}_\alpha(W_{ij}^S) = \frac{\lambda_{ij}^S {\sigma^S_{ij}}^2}{2(1- \rho_{ij}^S)} [1 - \ln{(1-\alpha)}].
\end{equation}
\end{theorem}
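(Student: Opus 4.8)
The plan is to mirror the proof of Theorem~\ref{Theorem2} almost verbatim, exploiting the fact that the approximate CDF~(\ref{approx_W_S}) of $W_{ij}^S$ has exactly the same exponential form as the CDF~(\ref{approx_CDF_W_D}) of $W_{ij}^D$ under the correspondence $\rho_{ij}\mapsto\rho_{ij}^S$ and $\lambda_i V_{ij}\mapsto\lambda_{ij}^S{\sigma_{ij}^S}^2$. Since every step in the derivation of Theorem~\ref{Theorem2} enters the CDF only through these two quantities, the entire chain of computations carries over unchanged, and the structural analogy is really the whole content of the argument.

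First I would compute the VaR of $W_{ij}^S$ directly from Definition~\ref{def1}. The complementary CDF is $P(W_{ij}^S>\gamma)={\rm exp}[-\frac{2(1-\rho_{ij}^S)}{\lambda_{ij}^S{\sigma_{ij}^S}^2}\gamma]$, so imposing the tail constraint $P(W_{ij}^S>\gamma)\leq\alpha$ and taking the infimum over $\gamma$ gives
\[
{\rm VaR}_\alpha(W_{ij}^S)=\frac{-\lambda_{ij}^S{\sigma_{ij}^S}^2}{2(1-\rho_{ij}^S)}\ln(1-\alpha),
\]
which is the exact analogue of~(\ref{VaR_W_D}).

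Next I would substitute this expression into the integral representation of CVaR in Definition~\ref{def2}, which reduces the claim to evaluating the elementary integral $\int_\alpha^1\ln(1-\theta)\,d\theta$. Via the substitution $u=1-\theta$ followed by integration by parts, this integral equals $(1-\alpha)\ln(1-\alpha)-(1-\alpha)$; multiplying by the prefactor $\frac{-\lambda_{ij}^S{\sigma_{ij}^S}^2}{2(1-\rho_{ij}^S)}$ and dividing by $1-\alpha$ then collapses to the claimed expression $\frac{\lambda_{ij}^S{\sigma_{ij}^S}^2}{2(1-\rho_{ij}^S)}[1-\ln(1-\alpha)]$.

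There is no genuine obstacle here: the result follows mechanically once the parallel with Theorem~\ref{Theorem2} is recognized, so the ``hard part'' is essentially bookkeeping of the substituted parameters. The only point deserving minor care is the integrand's singularity as $\theta\to1$, where $\ln(1-\theta)\to-\infty$; the improper integral nonetheless converges because $\lim_{u\to0^+}u\ln u=0$ kills the singularity, so the evaluation is fully rigorous and the heavy-traffic approximation~(\ref{approx_W_S}) remains the only source of inexactness in the statement.
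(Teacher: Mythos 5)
Your proposal is correct and follows exactly the route the paper intends: the paper omits this proof, stating only that it is ``similar to Theorem~\ref{Theorem2},'' and your argument is precisely that proof of Theorem~\ref{Theorem2} transplanted via the substitutions $\rho_{ij}\mapsto\rho_{ij}^S$, $\lambda_i V_{ij}\mapsto\lambda_{ij}^S{\sigma_{ij}^S}^2$, with the VaR computed from the exponential tail and then integrated per Definition~\ref{def2}. Your explicit evaluation of $\int_\alpha^1 \ln(1-\theta)\,d\theta$ and the remark on the integrable singularity at $\theta\to 1$ are correct and slightly more careful than what the paper records.
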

The proof of Theorem \ref{Theorem3} is similar to Theorem \ref{Theorem2} and is omitted here for brevity.
\par
Since we assume constant computing capability at the ECS, the CVaR of the service time of $Q_{ij}^S$  is at the same value as itself, i.e.
\begin{equation}
\label{CVaR_t_S}
{\rm CVaR}_\alpha(t_{ij}^S) = \frac{c_i}{f_{ij}}.
\end{equation}
\par
With the CVaR of each part of the delay involved in the task offloading, we provide an upper bound of the CVaR of the total delay in the following theorem based on the convexity and sub-additivity \cite{pflug2000some}.
\begin{theorem}
\label{Theorem4}
For a confidence level $\alpha \in (0,1)$, an upper bound of the $\alpha$-CVaR of $t_i$ is  given by
\begin{equation}
\label{CVaR_t}
{\rm CVaR}_\alpha(t_i) \leq {\rm CVaR}_\alpha(t_i)^\ast,
\end{equation}
where
\begin{equation}
{\rm CVaR}_\alpha(t_i)^\ast =  \sum_{j=1}^N x_{ij} \left[  {\rm CVaR}_\alpha(W_{ij}^D)  + {\rm CVaR}_\alpha(t_{ij}^D) + {\rm CVaR}_\alpha(W_{ij}^S) + {\rm CVaR}_\alpha(t_{ij}^S) \right].
\end{equation}
\end{theorem}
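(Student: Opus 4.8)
The plan is to derive the bound purely from the coherence of CVaR, without ever attempting to compute the joint distribution of the four delay components. First I would combine the definition $t_i=\sum_{j=1}^N x_{ij}t_{ij}$ with the decomposition in (\ref{totaldelay}) to write
\[
t_i = \sum_{j=1}^N x_{ij}\left(W_{ij}^D + t_{ij}^D + W_{ij}^S + t_{ij}^S\right),
\]
so that the task-completion delay is a non-negative linear combination of the elementary delays whose individual CVaRs have already been characterised in Theorem~\ref{Theorem2}, Theorem~\ref{Theorem3}, (\ref{CVaR_t_D_cal}) and (\ref{CVaR_t_S}).

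The core of the argument rests on two properties of CVaR as a coherent risk measure \cite{pflug2000some}: sub-additivity, ${\rm CVaR}_\alpha(X+Y)\le{\rm CVaR}_\alpha(X)+{\rm CVaR}_\alpha(Y)$, and positive homogeneity, ${\rm CVaR}_\alpha(cX)=c\,{\rm CVaR}_\alpha(X)$ for $c\ge 0$. I would apply sub-additivity repeatedly to split the CVaR of the sum above, first across the index $j$ and then across the four summands of each $t_{ij}$, and then invoke positive homogeneity to factor out the non-negative scalars $x_{ij}$. Collecting the resulting terms reproduces exactly ${\rm CVaR}_\alpha(t_i)^\ast$, which establishes the claimed inequality.

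The subtlety worth emphasising, and the reason this short argument is legitimate, is that the four components $W_{ij}^D$, $t_{ij}^D$, $W_{ij}^S$ and $t_{ij}^S$ are strongly dependent: the cascaded queue structure ties the arrival process at the ECS to the departure process of the device queue, and both queuing delays are driven by the same random channel gain through the transmission time. Sub-additivity of a coherent risk measure holds for \emph{arbitrarily} dependent random variables, so no independence assumption is needed and this dependence never has to be modelled explicitly. I expect the only place requiring care is the bookkeeping of the binary weights $x_{ij}$ under positive homogeneity; since $x_{ij}\in\{0,1\}$ the terms with $x_{ij}=0$ simply vanish because ${\rm CVaR}_\alpha(0)=0$, leaving the single active offloading branch, which is consistent with retaining the full sum over $j$ in the final expression.
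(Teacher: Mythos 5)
Your proof is correct and matches the paper's argument in all essentials: both decompose $t_i$ first across the index $j$ and then across the four delay components, handling the second step by sub-additivity of CVaR, which — as you rightly stress — holds under arbitrary dependence among the queuing and transmission delays. The only cosmetic difference is how the weights $x_{ij}$ are extracted: the paper invokes convexity (a Jensen-type inequality, valid because $\sum_{j=1}^N x_{ij}=1$), whereas you invoke sub-additivity plus positive homogeneity (valid because $x_{ij}\ge 0$); for a positively homogeneous functional these properties are equivalent, so the two arguments coincide.
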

\begin{proof}
\label{proofofTheorem4}
According to the convexity, the $\alpha$-CVaR of $t_i$ satisfies the following Jensen inequality:
\begin{equation}
\label{JensenInequality}
{\rm CVaR}_\alpha(t_i) 
= {\rm CVaR}_\alpha(\sum_{j=1}^N{x_{ij}t_{ij}}) \leq \sum_{j=1}^N{x_{ij} {\rm CVaR}_\alpha(t_{ij})}.
\end{equation}
Furthermore, based on (\ref{totaldelay}) and the sub-additivity of the CVaR, we have the following inequality:
\begin{equation}
\label{SubaddInequality}
{\rm CVaR}_\alpha(t_{ij}) \leq {\rm CVaR}_\alpha(W_{ij}^D) + {\rm CVaR}_\alpha(t_{ij}^D) + {\rm CVaR}_\alpha(W_{ij}^S) + {\rm CVaR}_\alpha(t_{ij}^S).
\end{equation}
By combining (\ref{JensenInequality}) and (\ref{SubaddInequality}), Theorem \ref{Theorem4} is proved.
\end{proof}

\section{Problem formulation and solution}
\subsection{Problem formulation}
In the design of the edge computing-assisted IIoT system, not only the average latency but also risk behind the intense delay jitter should be carefully considered. Taking into account both the average delay performance and the risk, we set the objective of the task offloading problem as the weighted sum of the average delay and the CVaR, i.e. the mean-risk sum. We have shown that obtaining an explicit expression of both the two terms is often cumbersome, especially for the complex wireless environment in the automated factories. Therefore, the two upper bounds of the average total delay and the corresponding CVaR derived in the previous section are adopted instead. Furthermore, in the considered mission-critical IIoT scenario, the performance of the whole system is usually determined by the device with the worst performance. As a result, we aim to minimize the maximum mean-risk sum among all the devices, which can be described as the following optimization problem:
\begin{subequations}
\label{P1}
\begin{alignat}{2}
\min_{\mathbf{X},\mathbf{f}} \max_{i\in\mathcal{M}} \quad & E[t_i]^\ast + \beta {\rm CVaR}_\alpha(t_i)^\ast &\quad &\tag{28}\\
\text{s.t.}\quad 
&x_{ij}\in\{0,1\},\forall i\in\mathcal{M},\forall j\in\mathcal{N}, \label{P1:C1}\\
&\sum_{j=1}^N{x_{ij}}=1,\forall i\in\mathcal{M}, \label{P1:C2}\\
&\sum_{i=1}^M{x_{ij}}\leq N_j,\forall j\in\mathcal{N}, \label{P1:C3}\\
&\sum_{i=1}^M{f_{ij}\leq F_j},\forall j \in \mathcal{N}, \label{P1:C4}
\end{alignat}
\end{subequations}
where $\mathbf{f}=[f_{ij}],i\in\mathcal{M},j\in\mathcal{N}$ is the computation frequency allocation matrix, $\beta\in(0,1)$ is the weight of the CVaR, also called the risk-sensitive parameter, and $F_j$ is the overall computation frequency of ECS $j$. Constraint (\ref{P1:C2}) is used to guarantee that the tasks generated by the device can only be offloaded to one ECS. Constraint (\ref{P1:C3}) and (\ref{P1:C4}) indicate that the number of devices served by a ECS should not exceed the number of its CPU cores, and the sum of the computation frequency allocated to these devices should not exceed its overall computation frequency. Substituting (\ref{E_t_ast}), (\ref{CVaR_W_D_cal}), (\ref{CVaR_t_D_cal}), (\ref{CVaR_W_S}) and (\ref{CVaR_t_S}) to (\ref{P1}), we find that the optimization problem is a non-convex mixed integer non-linear problem (MINLP), which is NP-hard \cite{bertsimas1997introduction}. To reduce the computation overhead, we propose a heuristic algorithm, which will be described in details in the following subsection.
\subsection{Problem solving}
Recall that the CVaR of the queuing delay at the device in (\ref{CVaR_t_D_cal}) is in the form of an minimization problem. Since the optimization variable $\gamma$ in (\ref{CVaR_t_D_cal}) is independent of the optimization variables $\mathbf{X}$ and $\mathbf{f}$ in (\ref{P1}), we can solve (\ref{CVaR_t_D_cal}) first and substitute its optimal solution to (\ref{P1}) for the subsequent problem solving.
\par
To solve (\ref{CVaR_t_D_cal}), we introduce an auxiliary variable $z_ij=(t_{ij}^D-\gamma)^+$, and problem (\ref{CVaR_t_D_cal}) can be transformed to the following problem
\begin{subequations}
\label{P2}
\begin{alignat}{2}
\min_{\gamma\in\mathcal{R},z_{ij}} \quad & \gamma + \frac{1}{1+\alpha}E[z_{ij}] &\quad & \tag{29}\\
\text{s.t.}\quad 
& z_{ij} \geq t_{ij}^D - \gamma, \label{P2:C1}\\
& z_{ij} \geq 0. \label{P2:C2}
\end{alignat}
\end{subequations}
Problem (\ref{P2}) is a stochastic optimization problem with the expectation taken over the channel gain $g_{ij}$. To approximate the expectation, we sample the probability distribution of $g_{ij}$ \cite{rockafellar2000optimization}, and a transformed problem is obtained as follows:
\begin{subequations}
\label{P3}
\begin{alignat}{2}
\min_{\gamma\in\mathcal{R},\mathbf{z_{ij}}} \quad & \gamma + \frac{1}{1+\alpha} \frac{1}{K} \sum_{k=1}^K z_{ij}^k &\quad & \tag{30}\\
\text{s.t.}\quad 
& z_{ij}^k \geq t_{ij}^D - \gamma, \forall k\in \mathcal{K}, \label{P3:C1}\\
& z_{ij}^k \geq 0, \forall k\in \mathcal{K}, \label{P3:C2}
\end{alignat}
\end{subequations}
where $z_{ij}^k, k \in \mathcal{K}= \{1,2,\cdots,K\}$ are the samples of $z_{ij}$. Problem (\ref{P3}) is a linear optimization problem, the optimal solution of which, denoted by $U_{ij}$, can be obtained from the interior point method (IPM) \cite{boyd2004convex}. There are $K+1$ variables and $2K$ constraints in problem (\ref{P3}), so we can solve it at the time complexity of $O(((3K+1)(K+1)^2+(3K+1)^{1.5}(K+1))\delta)$, where $\delta$ is the number of decoded bits \cite{vaidya1990algorithm}.
\par
With all the derived average and CVaR terms, problem (\ref{P1}) can be reformulated as the following optimization problem:
\begin{subequations}
\label{P4}
\begin{alignat}{2}
\min_{\mathbf{X},\mathbf{f}} \max_{i\in\mathcal{M}} \quad & 
\sum_{j=1}^N x_{ij} 
\Biggl\{ 
\biggl[ 
\frac{\lambda_i}{2\mu^2_{ij}(1-\rho_{ij})}
+ \frac{1}{\mu_{ij}}
+ \frac{\lambda^S_{ij}{\sigma^S_{ij}}^2}{2(1-\rho^S_{ij})} + \frac{c_i}{f_{ij}} \biggr] + 
\notag \\
& \beta 
\biggl[ 
\bigl[\frac{\lambda_{ij}^S {\sigma^S_{ij}}^2}{2(1- \rho_{ij}^S)} + \frac{\lambda_i V_{ij}}{2(1- \rho_{ij})}  \bigr] [1 - \ln{(1-\alpha)}] 
+ U_{ij} 
+ \frac{c_{i}}{f_{ij}} 
\biggr]
\Biggr\}
&\quad &\tag{31}\\
\text{s.t.}\quad 
& {\rm (28a),(28b),(28c),(28d)}. \label{P4:C1}
\end{alignat}
\end{subequations}
It is obvious that the objective function of problem (\ref{P4}) contains the term $x_{ij}/f_{ij}$, and thus problem (\ref{P4}) is still a non-convex MINLP. In order to reduce the computational complexity, we decompose the original problem into two sub-problems. First, we consider the following problem:
\begin{subequations}
  \label{P5}
  \begin{alignat}{2}
  \min_{\mathbf{X}} \max_{i\in\mathcal{M}} \quad & 
  \sum_{j=1}^N x_{ij} V_{ij}
  &\quad &\tag{32}\\
  \text{s.t.}\quad 
  & {\rm (28a),(28b),(28c)}. \label{P5:C1}
  \end{alignat}
  \end{subequations}
where
\begin{equation}
V_{ij} = 
  \frac{\lambda_i}{2\mu^2_{ij}(1-\rho_{ij})} + \frac{1}{\mu_{ij}} + \beta 
  \biggl[ 
  \frac{\lambda_i V_{ij}}{2(1- \rho_{ij})} [1 - \ln{(1-\alpha)}] 
  + U_{ij} 
  \biggr].
\end{equation}
It is worthwhile to mention that problem (\ref{P5}) is a convex MINLP, which can generally be solved via an outer approximation algorithm or an extended cutting plane algorithm \cite{kronqvist2019review}. More specifically, problem (\ref{P5}) is in the form of a bottleneck generalized assignment problem \cite{mazzola1988bottleneck}, which can be solved through the algorithm proposed in \cite{martello1995bottleneck} at the time complexity of $O(MN\log{N}+\theta(NM+N^2))$, where $\theta$ is the number of bits required to encode $\max_{i,j}{V_{ij}}$. After solving the optimal offloading matrix for problem (\ref{P5}) denoted by $X^\ast = [x_{ij}^\ast], i\in \mathcal{M},j\in \mathcal{N}$, we substitute $X^\ast$ to (\ref{P1}), and the second sub-problem can be formulated as follows:
\begin{subequations}
\label{P6}
\begin{alignat}{2}
\min_{\mathbf{f}} \max_{i\in\mathcal{M}} \quad & 
\sum_{j=1}^N x_{ij}^\ast 
\Biggl\{ 
\biggl[ 
\frac{\lambda_i}{2\mu^2_{ij}(1-\rho_{ij})}
+ \frac{1}{\mu_{ij}}
+ \frac{\lambda^S_{ij}{\sigma^S_{ij}}^2}{2(1-\rho^S_{ij})} + \frac{c_i}{f_{ij}} \biggr] + 
\notag \\
& \beta 
\biggl[ 
\bigl[\frac{\lambda_{ij}^S {\sigma^S_{ij}}^2}{2(1- \rho_{ij}^S)} + \frac{\lambda_i V_{ij}}{2(1- \rho_{ij})} \bigr] [1 - \ln{(1-\alpha)}] 
+ U_{ij} 
+ \frac{c_{i}}{f_{ij}} 
\biggr]
\Biggr\}
&\quad &\tag{34}\\
\text{s.t.}\quad 
& {\rm (28d)}. \label{P6:C1}
\end{alignat}
\end{subequations}
Problem (\ref{P6}) is a non-convex optimization problem. To transform it to a convex problem, we introduce an auxiliary variable $\mathbf{G}=[1/f_{ij}],i\in\mathcal{M},j\in\mathcal{N}$, and an optimization problem equivalent to (\ref{P6}) is given by
\begin{subequations}
\label{P7}
\begin{alignat}{2}
\min_{\mathbf{G}} \max_{i\in\mathcal{M}} \quad & 
\sum_{j=1}^N x_{ij}^\ast 
\Biggl\{ 
\biggl[ 
\frac{\lambda_i}{2\mu^2_{ij}(1-\rho_{ij})}
+ \frac{1}{\mu_{ij}}
+ \frac{\lambda^S_{ij}{\sigma^S_{ij}}^2}{2(1-\rho^S_{ij})} + c_i G_{ij} \biggr] +
\notag \\
& \beta 
\biggl[ 
\bigl[\frac{\lambda_{ij}^S {\sigma^S_{ij}}^2}{2(1- \rho_{ij}^S)} + \frac{\lambda_i V_{ij}}{2(1- \rho_{ij})} \bigr] [1 - \ln{(1-\alpha)}] 
+ U_{ij} 
+ c_i G_{ij}
\biggr]
\Biggr\}
&\quad &\tag{35}\\
\text{s.t.}\quad 
&\frac{1}{F_j} \leq G_{ij} \leq \frac{1}{x_{ij}^\ast \lambda_{ij}^S c_i}, \forall i\in \mathcal{M}, \forall j \in \mathcal{N}, \label{P7:C1} \\
&\sum_{i=1}^M{\frac{x_{ij}^\ast}{G_{ij}}} \leq F_j,\forall j \in \mathcal{N}. \label{P7:C2}
\end{alignat}
\end{subequations}
The right side of inequality (\ref{P7:C1}) is used to maintain the stability of the queue at the ECS. Although problem (\ref{P7}) is a convex optimization problem, the objective function is in the form of the pointwise maximum of $M$ mean-risk sums. To handle this, we optimize the epigraph of problem (\ref{P7}) and obtain the equivalent problem as follows:
\begin{subequations}
  \label{P8}
  \begin{alignat}{2}
  \min_{\mathbf{G},T} \quad & T \quad &\tag{36}\\
  \text{s.t.} \quad 
  &\sum_{j=1}^N x_{ij}^\ast 
  \Biggl\{ 
  \biggl[ 
  \frac{\lambda_i}{2\mu^2_{ij}(1-\rho_{ij})}
  + \frac{1}{\mu_{ij}}
  + \frac{\lambda^S_{ij}{\sigma^S_{ij}}^2}{2(1-\rho^S_{ij})} + c_i G_{ij} \biggr] +
  \notag \\
  & \beta 
  \biggl[ 
  \bigl[\frac{\lambda_{ij}^S {\sigma^S_{ij}}^2}{2(1- \rho_{ij}^S)} + \frac{\lambda_i V_{ij}}{2(1- \rho_{ij})}  \bigr] [1 - \ln{(1-\alpha)}] 
  + U_{ij} 
  + c_i G_{ij}
  \biggr]
  \Biggr\} \leq T, \forall i\in \mathcal{M},\label{P8:C1} \\
  &{\rm (35a),(35b)}. \label{P8:C2}
  \end{alignat}
\end{subequations}
\par
Problem (\ref{P8}) is a convex non-linear optimization problem, which can be solved by various algorithms such as IPM. Constraint (\ref{P8:C1}), (\ref{P7:C1}) and (\ref{P7:C2}) consists of $M$, $\sum_{i=1}^M{\sum_{j=1}^N{x_{ij}^\ast}}$ and $N$ individual constraints, respectively. As a result, problem~(\ref{P8}) has $L \triangleq M+N+\sum_{i=1}^M{\sum_{j=1}^N{x_{ij}^\ast}}$ constraints in all. According to \cite{den2012interior}, we can find an $\epsilon$-optimal solution to problem (\ref{P8}) in $O(\kappa \sqrt{L} \ln{\frac{L\mu_0}{\epsilon}})$ Newton iterations through the logarithmic barrier method \cite{den1992classical}, where $\kappa$ is the self-concordance factor, $\mu_0$ is the initial barrier value and $\epsilon$ is the accuracy parameter.
Till now, both the offloading matrix $\mathbf{X}$ and the frequency allocation matrix $\mathbf{f}$ have been solved.

\section{Results and discussion}
In this section, we evaluate the proposed strategy through numerical results. We consider a typical use case in the edge computing-assisted IIoT system, i.e. the video-operated remote control use case, with a typical latency requirement of 10-100 ms and payload size of 15-150 kbytes \cite{IIoT2019}. In the simulation, we consider 8 IIoT devices offload their tasks to 2 ECSs. Without loss of generality, we set the data size to 0.5 Mbits, i.e. 62.5 kbytes, and the task computation intensity to 15 cycles/bit for each task of each device. Each ECS is equipped with a four-core CPU. The task arrival rate of each device, i.e. the parameter of the Poisson process, is set to be uniformly distributed in $(10,30)$. The bandwidth of each wireless channel is 10 MHz, and the transmission power, noise power at the receiver and the path loss are all set to be identical for each device at 30 dBm, $10^{-9}$ W and 70 dB, respectively. To characterize the fading channel in the practical automated factory, we set the channel distribution as a mixture of Rayleigh and log-normal distribution, which has been confirmed by the measurements in the real industrial environment \cite{olofsson2016modeling}. The parameter of the Rayleigh distribution is set to be uniformly distributed in $(0.5,1)$ for each IIoT-ECS pair, and correspondingly, the two parameters of the log-normal distribution are set to be uniformly distributed in $(1,2)$ and $(0,4)$ respectively. Finally, we set the confidence level to $\alpha = 0.99$.
\par
Our proposed strategy considers both the queuing effect and the risk behind the total delay, and thus is denoted by queuing-based and risk-sensitive (Q-R) strategy in the following simulations. To evaluate the performance of the Q-R, we compare it with the following five strategies: i) the queuing-based and risk-sensitive optimal (Q-R-Opt) strategy, i.e. the globally optimal solution to problem (\ref{P1}); ii) the queuing-based and non-risk-sensitive (Q-NR) strategy, which considers the queuing effect but only optimizes the average delay performance, i.e. the weight of the CVaR is set to $\beta=0$; iii) the queuing-based and non-risk-sensitive optimal (Q-NR-Opt) strategy, i.e. the globally optimal solution that corresponds to the Q-NR case;  iv) the non-queuing-based and risk-sensitive (NQ-R) strategy, which takes into account both the average delay and the CVaR, but does not consider the queuing effect; v) the non-queuing-based and non-risk-sensitive (NQ-NR), which considers neither the queuing effect nor the risk. In the following simulations, we set the weight of the CVaR to $\beta=2$ for risk-sensitive strategies.
\par
\begin{figure}[!t]
\includegraphics[width=3.3in]{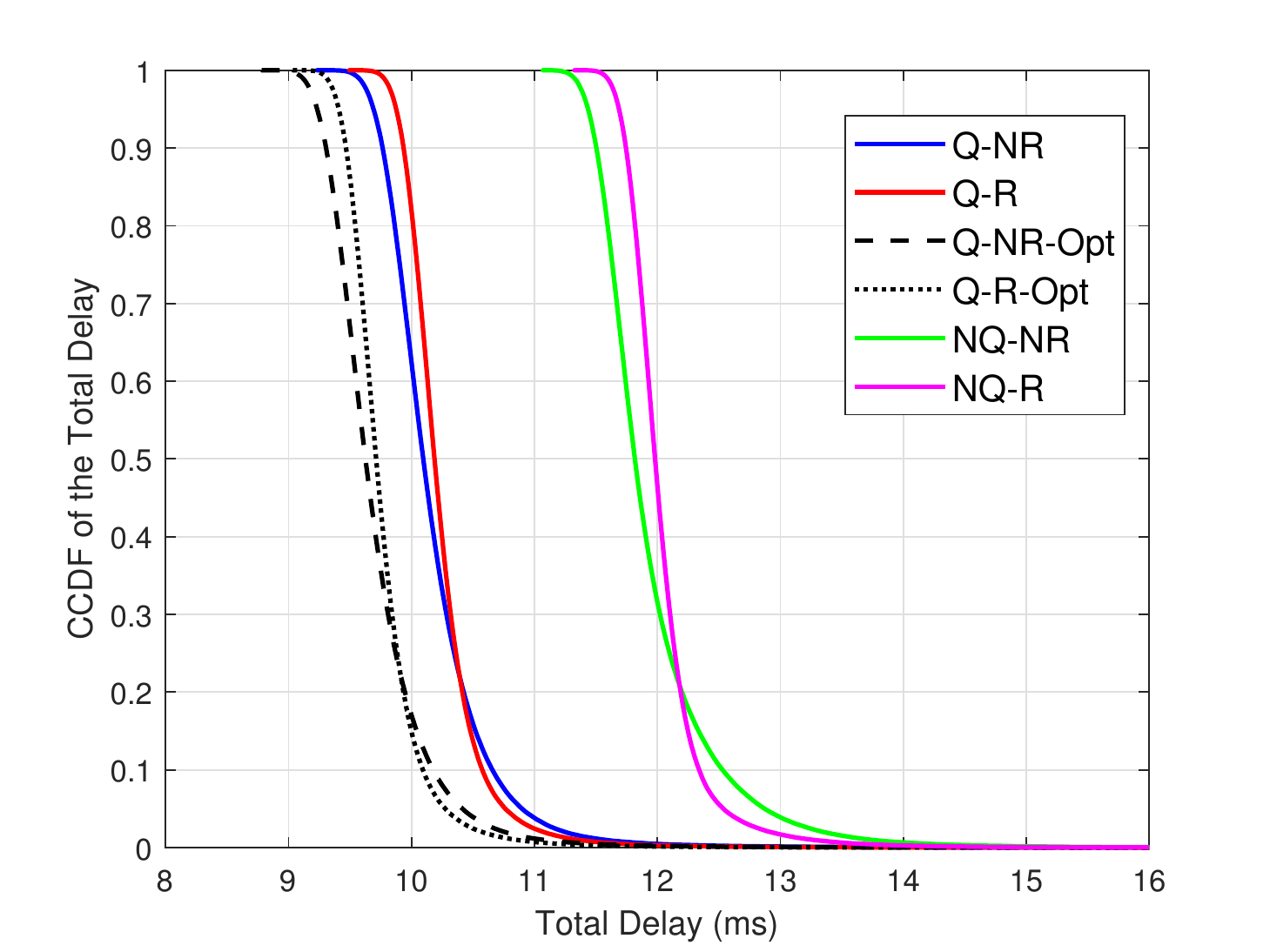}
\caption{CCDF of the total delay}
\end{figure}
We first investigate the complementary cumulative distribution function (CCDF) of the total delay under the six offloading strategies, since the CVaR captures the tail information of the delay distribution. As presented in Fig. 3, for the probability of ultra-high delay, the curve of Q-R, Q-R-Opt and NQ-R are all lower than their corresponding non-risk-sensitive strategies. 
This implies that by adding the CVaR to the optimization objective, the risk of high delay can be greatly reduced. On the other hand, we can see that for any value of the total delay, the CCDF under NQ-R and NQ-NR is greater than that under Q-R, Q-R-Opt, Q-NR and Q-NR-Opt, which means the non-queuing strategies are more likely to arise a higher delay. This is reasonable, since the non-queuing strategies neglect the queuing effect in the strategy design, which leads to a higher queuing delay. Furthermore, the curve of Q-R is close to that of Q-R-Opt and has nearly the same trend, which indicates that the proposed algorithm achieves near-optimal performance with a great reduction in computation complexity. 
\par
\begin{figure}[!t]
\includegraphics[width=3.3in]{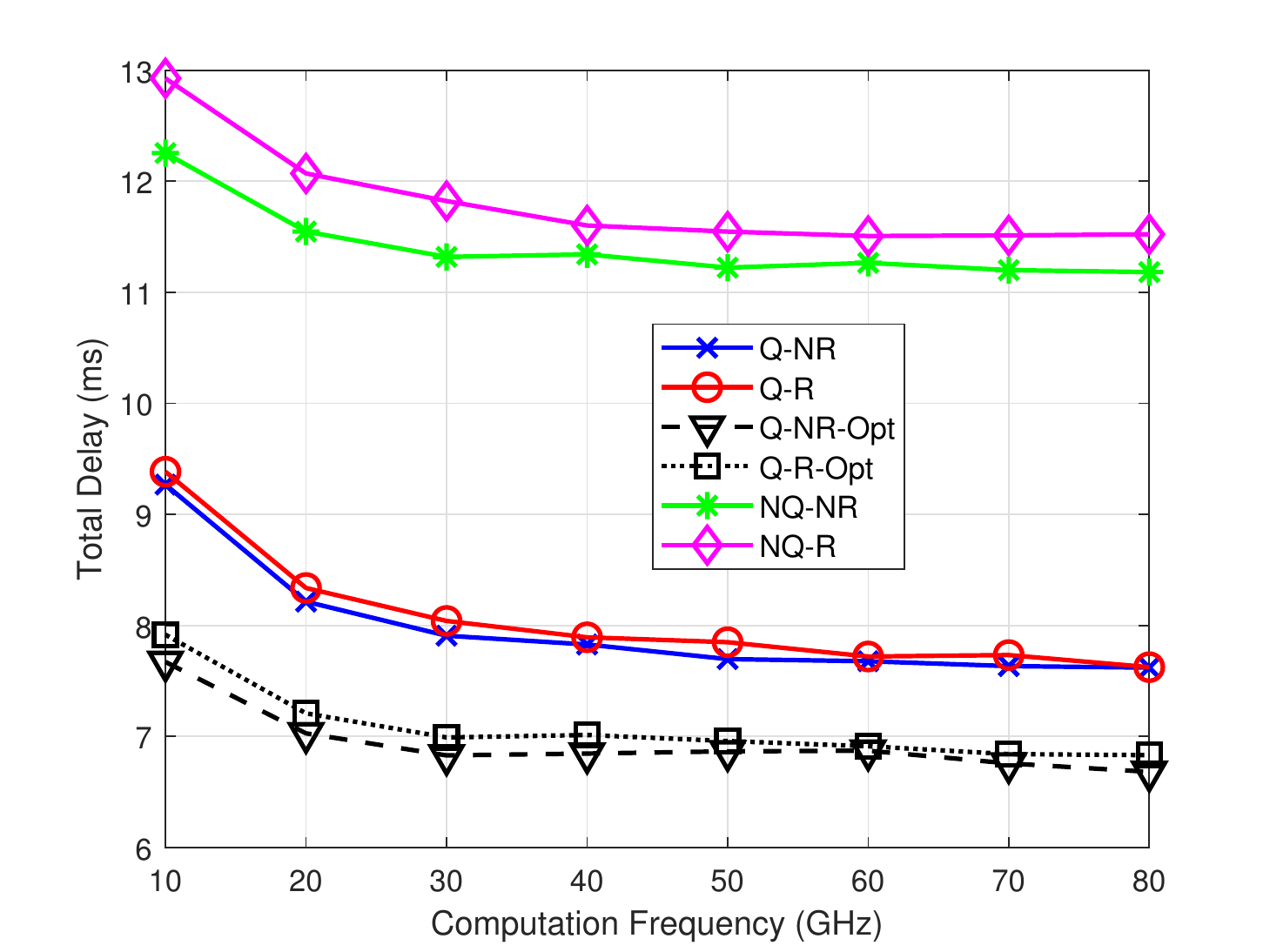}
\caption{Average total delay versus the computation frequency}
\end{figure}

\begin{figure}[!t]
\includegraphics[width=3.3in]{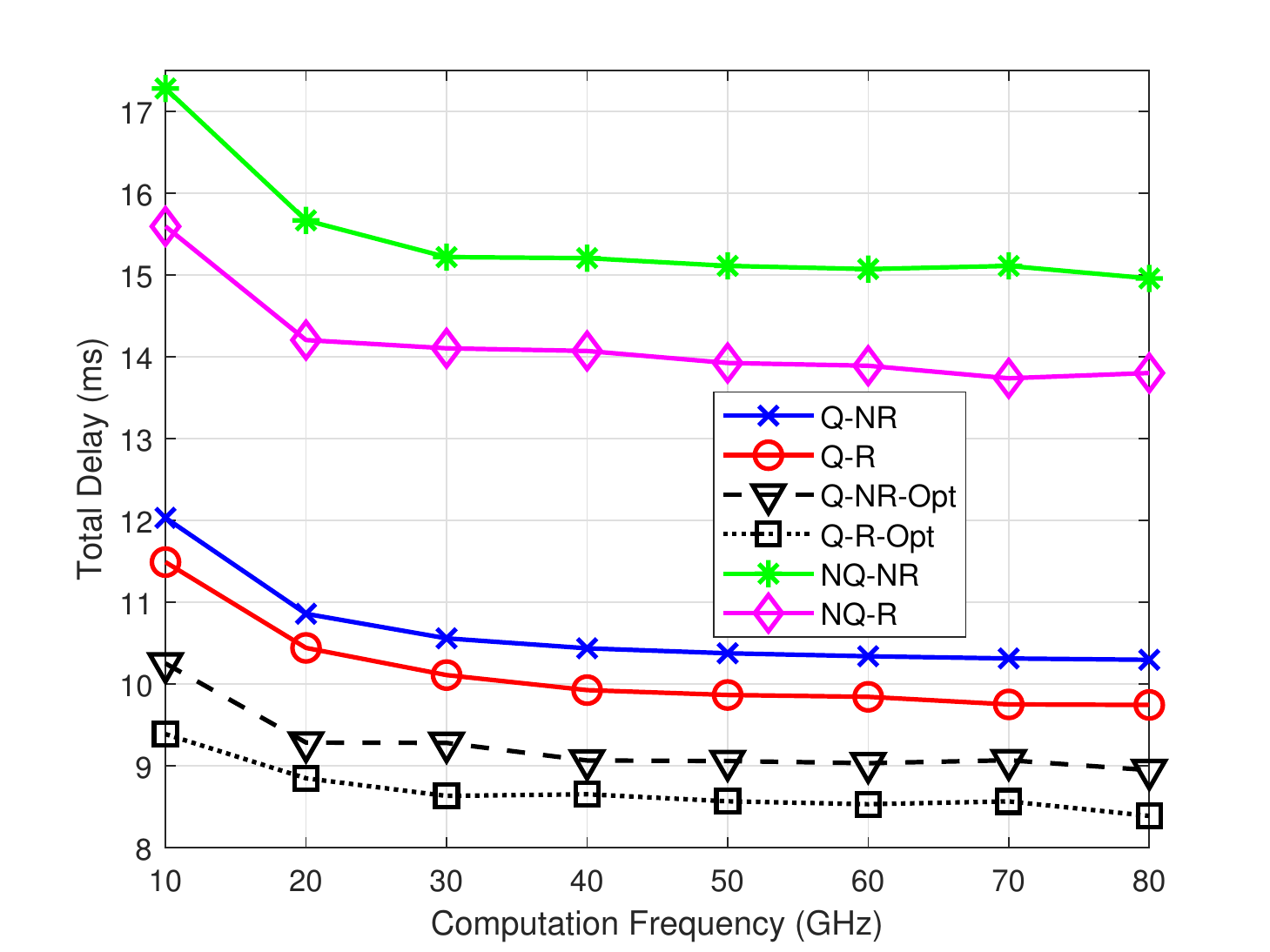}
\caption{99th percentile of the total delay versus the computation frequency}
\end{figure}
In Fig. 4 and Fig. 5, we compare how the delay performance evolves with the computation frequency of the ECS under the six strategies. Specifically, Fig.4 investigates relationship between the average delay and the computation frequency, and Fig. 5 focus on the 99th percentile of the total delay. It can be seen that with the increasing computation frequency, the average total delay and the 99th percentile decreases for all the six strategies. The reason is that the higher the computation frequency, the more computation resources allocated to the IIoT devices and thus the lower the computation delay. Furthermore, note that the delay doesn't descend much when computation frequency is relatively high. This is because for high computation frequency, both the computation delay and the queuing delay at the ECS is relatively low, and thus the total delay is mainly dependent on the queuing delay at the devices. We can also see that the queuing strategies outperforms the corresponding non-queuing strategies for both the average performance and the 99th percentile, which verifies the significance of the queuing analysis again. More importantly, the two figures verify the near-optimality of the proposed algorithm, and jointly indicate that the risk-sensitive strategies achieve nearly the same average total delay as the non-risk-sensitive ones, but greatly reduce the 99th percentile of the total delay by incorporating the risk to the design of offloading strategy. In other words, the intense delay jitter can be effectively controlled under the risk-sensitive strategies only at the price of very little degradation on the average performance.
\par
\begin{figure}[!t]
\includegraphics[width=3.3in]{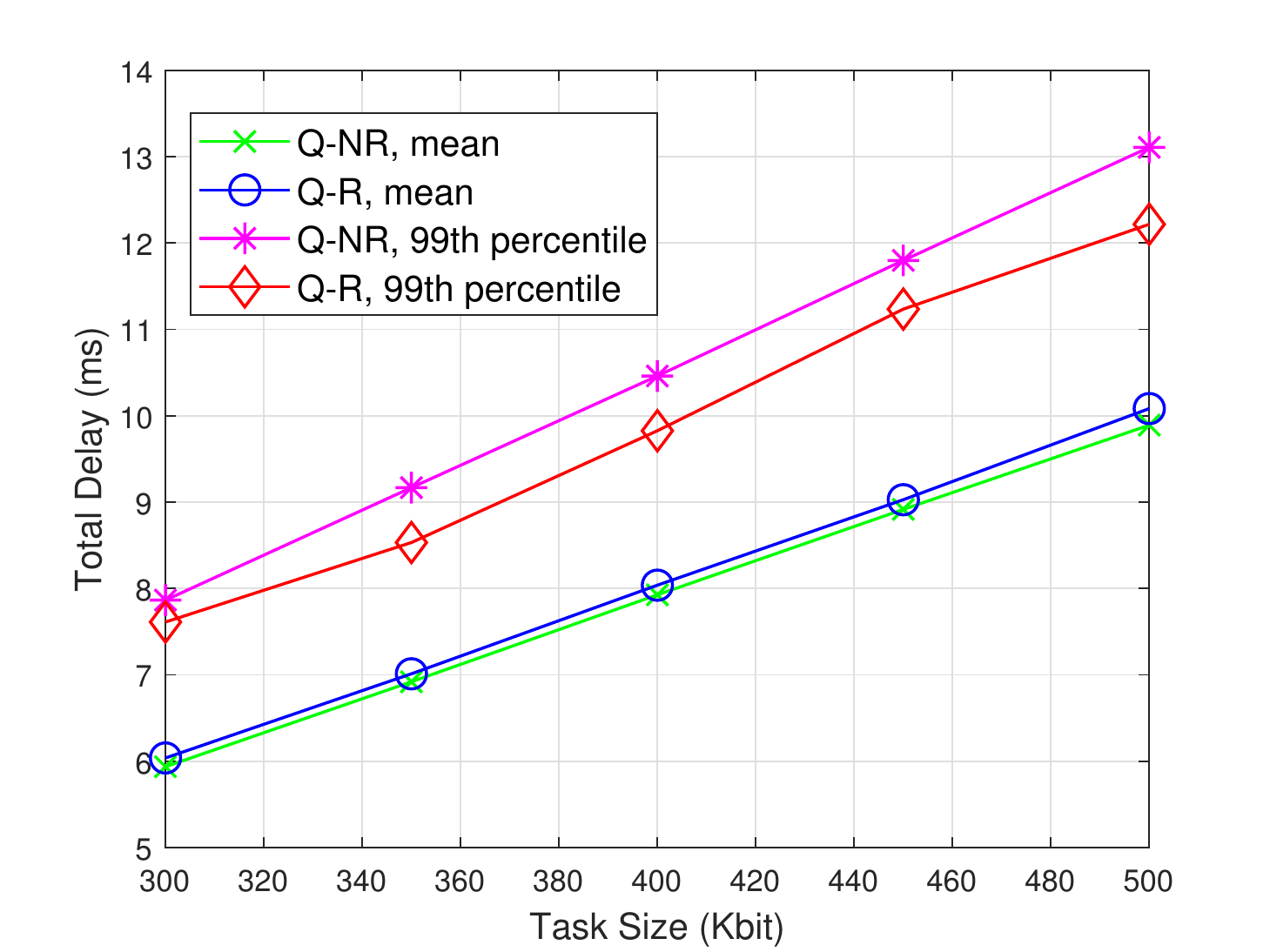}
\caption{Delay performance versus the task size where the computation frequency is 10 GHz}
\end{figure}
Finally, we investigate the effect of the task size on both the average delay and the 99th percentile under the Q-R and Q-NR strategies. As shown in Fig. 6, the 99th percentile is higher than the average total delay for both strategies, since the former characterize the worst case delay. With the increase of the task size, both the average delay and the 99th percentile increase under both strategies. This is due to the fact that a larger task size lead to the higher transmission and computation delay. Furthermore, the 99th percentile of Q-R is always lower than that of Q-NR, while the two curves of the average delay almost coincide with each other. Note that the average delay under the Q-NR strategy is the lower bound of that under the Q-R. This implies that Q-R achieves nearly the same average performance as the Q-NR while simultaneously improving the worst case performance with respect to the total delay.

\section{Conclusions}
In this work, we introduce the risk management theory to design of the edge computing-assisted IIoT system. We explore the queuing theory and the properties of the CVaR to capture the tail distribution of the end-to-end delay, and provide two upper bounds of the average total delay and the CVaR. A joint task offloading and computation resource allocation problem is formulated to simultaneously minimize the average total delay and the risk. Since the problem is a non-convex MINLP, we decompose it into two sub-problems and design a two-stage heuristic algorithm. The computation complexity of each procedure of the proposed algorithm has been analyzed. Finally, simulations are performed under the practical channel model in the automated factories, and the results verify that the proposed strategy can effectively control the risk of intense delay jitter while guaranteeing the average delay performance. 


\begin{backmatter}

\section*{Acknowledgements}
Not applicable

\section*{Funding}
This work was supported by the Shanghai Municipal Natural Science Foundation (No. 19ZR1404700). This work was also supported in part by the National Science Foundation of China under Grants 71731004.

\section*{Abbreviations}
URLLC: Ultra-reliable low-latency communications; 5G: fifth generation; 6G: sixth generation; MDP: Markov decision process; ADMM: Alternating Direction Method of Multipliers; IIoT: Industrial Internet of Things; VaR: Value at Risk; CVaR: Conditional Value at Risk; ECS: Edge computing server; CDF: Cumulative distribution function; MINLP: Mixed integer non-linear problem; IPM: Interior point method; Q-R: Queuing-based and risk-sensitive; Q-NR: Queuing-based and non-risk-sensitive; Q-R-Opt: Queuing-based and risk-sensitive optimal; Q-NR-Opt: Queuing-based and non-risk-sensitive optimal; NQ-R: Non-queuing-based and risk-sensitive; NQ-NR: Non-queuing-based and non-risk-sensitive; CCDF: Complementary cumulative distribution function

\section*{Availability of data and materials}
Not applicable

\section*{Ethics approval and consent to participate}
Not applicable

\section*{Competing interests}
The authors declare that they have no competing interests.

\section*{Consent for publication}
Not applicable

\section*{Authors' contributions}
All the authors contributed to the system design, queuing and CVaR analysis, strategy design, simulations and the writing of this paper. All authors read and approved the final manuscript.

\section*{Authors' information}
Not applicable


\bibliographystyle{bmc-mathphys} 
\bibliography{risk_sensitive_edge_computing}      

\end{backmatter}
\end{document}